\newtheorem{Definition}{Definition}
\newtheorem{Problem}{Problem}
\newtheorem{Lemma}{Lemma}
\newtheorem{Policy}{Policy}
\newtheorem{Remark}{Remark}
\begin{document}

\title{MDP-Based Scheduling Design for Mobile-Edge Computing Systems with Random User Arrival}

\IEEEoverridecommandlockouts{
\author{
	\IEEEauthorblockN {Shanfeng Huang\IEEEauthorrefmark{1}\IEEEauthorrefmark{2}, Bojie Lv\IEEEauthorrefmark{1}\IEEEauthorrefmark{3} and Rui Wang\IEEEauthorrefmark{1}\IEEEauthorrefmark{3}}

    \IEEEauthorblockA{
	\IEEEauthorrefmark{1}Department of Electrical and Electronic Engineering, Southern University of Science and Technology\\
	\IEEEauthorrefmark{2}Department of Electrical and Electronic Engineering, The University of Hong Kong\\
	\IEEEauthorrefmark{3}Peng Cheng Laboratory, Shenzhen, China\\
	Email: sfhuang@eee.hku.hk, lyubj@mail.sustech.edu.cn, wang.r@sustech.edu.cn}
	\thanks{This work was supported in part by National Natural Science Foundation of China under Grants 61771232 and 91838301, and in part by the Shenzhen Science and Technology Innovation Committee under Grant JCYJ20160331115457945.}
	}}

\maketitle

\begin{abstract}
	In this paper, we investigate the scheduling design of a mobile-edge computing (MEC) system, where the random arrival of mobile devices with computation tasks in both spatial and temporal domains is considered. The binary computation offloading model is adopted. Every task is indivisible and can be computed at either the mobile device or the MEC server. We formulate the optimization of task offloading decision, uplink transmission device selection and power allocation in all the frames as an infinite-horizon Markov decision process (MDP). Due to the uncertainty in device number and location, conventional approximate MDP approaches to addressing the curse of dimensionality cannot be applied. A novel low-complexity sub-optimal solution framework is then proposed. We first introduce a baseline scheduling policy, whose value function can be derived analytically. Then, one-step policy iteration is adopted to obtain a sub-optimal scheduling policy whose performance can be bounded analytically. Simulation results show that the gain of the sub-optimal policy over various benchmarks is significant.
\end{abstract}

\IEEEpeerreviewmaketitle

\section{Introduction}
With the proliferation of smart mobile devices, new applications with computation-intensive tasks are springing up, such as image recognition, online gaming and mobile augmented reality. Mobile-edge computing (MEC) is envisioned as a promising network architecture to address the conflict between resource-hungry applications and resource-limited devices.

MEC has been intensively investigated in recent years. In \cite{You2015SingleUserWPT}, the authors considered a single user MEC system powered by wireless energy transfer. The optimal offloading decision, local CPU frequency and time division between wireless energy transfer and offloading were derived in closed form via convex optimization theory. The authors in \cite{you2016multiuser} extended the work to a multi-user scenario and formulated the multi-user resource allocation problem as a convex optimization problem. An insightful threshold-based optimal offloading strategy was derived. Moreover, game-theory-based algorithms were designed to resolve the contention of multi-user MEC offloading decision in \cite{chen2016gametheorymec,chen2015decentalizedgame}.

The above works ignore the dynamics of mobile devices. Moreover, they assume the transmission and computation of a task can be finished within channel coherent time, which may not be the case in many applications. Considering the randomness of channel fading and task arrival, the scheduling in MEC systems becomes a stochastic optimization problem. Several works have been done to tackle such scheduling problems in MEC systems. In \cite{huang2012dynamic}, the authors considered a single-user MEC system and proposed a Lyapunov optimization algorithm to minimize the long-term average energy consumption. Also, the authors in \cite{mao2016power-delay} investigated the power-delay tradeoff of a multi-user MEC system via Lyapunov optimization. Moreover, the authors in \cite{liu2016delayopt} solved the power constrained delay-optimal task scheduling problem for an MEC system via MDP. Nevertheless, all these works consider either a single mobile device or a number of fixed mobile devices. The scheduling design with random arrival of mobile devices remains open.

In this paper, we would like to shed some light on the above issue. Specifically, we consider an MEC system, where a base station (BS) is connected with an MEC server. New mobile devices, each with a computation task, arrive randomly in the coverage region of the BS. Every computation task can be either computed locally or offloaded to the MEC server via uplink transmission. The optimization of task offloading decision, uplink device selection and power allocation in all the frames is formulated as an infinite-horizon MDP with discounted cost. Due to the dynamics of arrival and departure, the number of mobile devices in the MEC system is variable. The conventional approximate MDP approaches cannot be applied to address the curse of dimensionality, and a novel solution framework is proposed in this paper. Particularly, we first propose a baseline scheduling policy, whose value function can be derived analytically. Then, one-step policy iteration is applied to obtain the proposed sub-optimal policy. Since the value function for policy iteration can be calculated from analytical expression, the conventional value iteration can be avoided. Moreover, the analytical value function of the baseline policy provides an upper bound on the average discounted cost of the proposed sub-optimal policy.

\section{System Model}

\subsection{Network Model}
\begin{figure}[tb]
	\centering
	\includegraphics[scale=0.4]{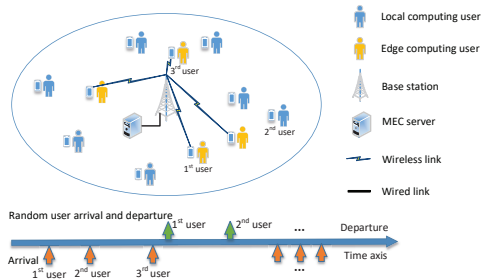}
	\caption{Illustration of MEC system model.}
	\label{fig: SystemModel}
\end{figure}

We consider an MEC system as illustrated in Fig. \ref{fig: SystemModel}, where a BS serves a region $ \mathcal C $ and an MEC server is connected with the BS. Mobile devices with computation tasks arrive randomly in the service region $ \mathcal C $. Binary computation offloading model is adopted, and every task is assumed to be indivisible in the sense of computing. Each task can be either computed locally or offloaded to the MEC server via uplink transmission.

There are a number of mobile devices in the cell region $ \mathcal C $, which may be quasi-static, moving inside or out of the region $ \mathcal C $. The mobile devices with computation tasks are named as active devices in the remainder of this paper. The time axis of computation and uplink transmission scheduling is organized by frames, each with a time duration $T_s$. In each frame, there is at most one new active device arrived in the cell with probability $ P_N\in(0,1] $. We have no restriction on the distribution and mobility model of the mobile devices in the cell. Instead, the distribution density of the new active device is represented as $ \lambda ({\mathbf l}) $ for arbitrary location in the cell region $ {\mathbf l} \in \mathcal C $. Thus, 
$\int_{\mathcal C} \lambda ({\mathbf l}) ds({\mathbf l}) = 1,$
and
$$\Pr [\mbox{New active device is in region } \mathcal C^{'}] \!\!= \!\!\int_{\mathcal C^{'}} \!\!\lambda ({\mathbf l}) ds({\mathbf l}), \ \forall \mathcal C^{'} \!\!\subseteq \mathcal \!C.$$
Moreover, it is assumed that the location of each active device is quasi-statistic in the cell when its task is being transmitted to the MEC server. The active devices become inactive when their computation tasks have been completed either locally or remotely at the MEC server.

Every new active device in the cell is assigned with a unique index. Let $ \mathcal U_L (t) $ and $ \mathcal U_E (t) $ be the sets of active devices in the $ t $-th frame whose tasks are computed locally and at the MEC server respectively, $ \mathcal{D}_{L}(t) \subseteq \mathcal{U}_L(t) $ and $\mathcal{D}_{E}(t) \subseteq \mathcal{U}_{E}(t)$ be the subsets of active devices whose computation tasks are accomplished in the $t$-th frame locally and at the MEC server respectively, $ n_t $ be the index of the new active device arriving at the beginning of $ t $-th frame.  If there is no active device arrival at the beginning of $ t $-th frame,  $ \{n_t\} = \emptyset $ where $ \emptyset $ represents the empty set. On the other hand, if there is a new active device arrival, the BS should determine if the computation task is computed at the device or the MEC server. Let $ e(t) \in \{0,1\} $ represents the decision, where $ e(t) = 1 $ means the task is offloaded to the MEC server and $ e(t)=0 $ means otherwise. Hence, the dynamics of active devices can be represented as
\begin{equation}
\mathcal{U}_E(t+1) = \left\{ \begin{array}{cc}
 \mathcal{U}_E(t) \cup \{n_{t}\} /\mathcal{D}_E(t)&  \mbox{when } e(t) = 1 \\
 \mathcal{U}_E(t)  /\mathcal{D}_E(t)&  \mbox{otherwise,}
\end{array}   \right.
\end{equation}
\begin{equation}
\mathcal{U}_L(t+1) = \left\{ \begin{array}{cc}
\mathcal{U}_L(t) \cup \{n_{t}\} /\mathcal{D}_L(t)&  \mbox{when } e(t)= 0 \\
\mathcal{U}_L(t)  /\mathcal{D}_L(t)&  \mbox{otherwise,}
\end{array}   \right.
\end{equation}
where operator $``/"$ denotes the set subtraction.

\subsection{Task Offloading Model}

The input data for each computation task is organized by segments, each with $ b_s $ information bits. Let $ d_k $ be the number of input segments for the task at the $ k $-th active device. It is assumed that the number of segments for each task is a uniformly distributed random integer between $ d_{min} $ and $ d_{max} $, i.e. $d_k\sim \mathbb U(d_{min}, d_{max})$. For the computation tasks offloaded to the MEC server, the input data should be delivered to the BS via uplink transmission. Hence, an uplink transmission queue is established at each active edge computing device. Let $ Q^E_k(t) $, $ \forall k \in \mathcal U_E(t) $, be the number of segments in the uplink transmission queue of the $ k $-th device at the beginning of the $ t $-th frame. Hence, $ \forall t \mbox{ with } \{n_t\} \neq \emptyset \mbox{ and } e(t)=1$,
\begin{equation*}
Q_{n_t}^E(t+1) = d_{n_t}.
\end{equation*}

In uplink, it is assumed that only one active device is selected in one uplink frame and the uplink transmission bandwidth is denoted as $W$.  Let $ H_k(t)=\sqrt{\rho_k(t)}h_k(t) ,\forall k\in \mathcal U_E(t) $, be the uplink channel state information (CSI) from the $ k $-th active device to the BS, where $ h_k(t) $ and $ \rho_k(t)  $ represent the small-scale fading and pathloss coefficients respectively. $ h_k(t) \sim \mathbb {CN} (0,1) $ is complex Gaussian distributed with zero mean and variance $ 1 $. Moreover, it is assumed that $ h_k(t) $ is independently and identically distributed (i.i.d.) for different $ t $ or $ k $. Let $ p_k(t) $ be the uplink transmission power of the $ k $-th active device if it is selected in the $ t $-th frame. The uplink channel capacity of the $ k $-th active device, if it is selected in the $ t $-th frame, can be represented by
\begin{equation*}
r_k(t) = W \log_2\left(1+\frac{p_k(t) \rho_k(t) |h_k(t)|^2}{\sigma_z^2}\right),
\end{equation*}
where $\sigma_z^2$ is the power of white Gaussian noise. Furthermore, the number of segments transmitted within the $t$-th frame can be obtained by
\begin{equation}
	\phi_k(t)=\biggl\lfloor \frac{r_k(t)T_s}{b_s} \biggr\rfloor,
\end{equation}
where $ \lfloor X \rfloor $ is the maximum integer less than or equal to $ X $. Hence, let $ a_t $ be the index of the selected uplink transmission device in the $ t $-th frame, we have the following queue dynamics for all $k\in \mathcal U_E(t)$,
\begin{equation}
	Q^E_k(t+1)=\left\{ \begin{array}{cc}
		\left[ Q^E_k(t)-\phi_k(t) \right]^+ & \mbox{if } k=a_t\\
		Q^E_k(t) & \mbox{if } k\neq a_t,
	\end{array} \right.
\end{equation}
where $ [X]^+ = \max \{0,X\} $.

As in many of the existing works \cite{You2015SingleUserWPT,mao2016dynamicmec,zhang2013mccstochastic}, it is assumed  that there are sufficiently many high-performance CPUs at the MEC server so that the computing latency at the MEC server can be neglected compared with the latency of local computing or uplink transmission. Moreover, due to relatively smaller sizes of computation results, the downloading latency of computation results is also neglected as in \cite{mao2016power-delay,mao2016dynamicmec,zhang2013mccstochastic}.

\subsection{Local Computing Model}
Following the computation models in \cite{you2016multiuser,mao2016power-delay}, the average number of CPU cycles for computing one bit of the input task data in the $k$ active device is denoted as $L_k$, which is determined by the types of applications. Denote the local CPU frequency of the $k$-th active device as $f_k$. We assume $L_k$ and $f_k$ are both uniformly distributed random variables, i.e. $L_k\sim \mathbb U(L_{min},L_{max})$ and $f_k\sim \mathbb U(f_{min},f_{max})$. An input data queue is established at each active local computing device. Let $ Q^L_k(t) $, $ \forall k \in \mathcal U_L(t) $, be the number of segments in the input data queue of the $ k $-th active device at the beginning of the $ t $-th frame. Hence, $ \forall t \mbox{ with } \{n_t\} \neq \emptyset \mbox{ and } e(t)= 0$,
\begin{equation*}
Q_{n_t}^L(t+1) = d_{n_t}.
\end{equation*}
Moreover, the queue dynamics at all active local computing devices can be written as
\begin{equation}
Q^L_k(t+1)= \left[Q^L_k(t)-\frac{f_k T_s}{L_k b_s}\right]^+, \ \forall k \in \mathcal U_L(t).
\end{equation}
Hence, the total computation time (measured in terms of frames) for $k$-th active device, whose task is computed locally, is given by
\begin{equation}
T_{loc}(d_k,f_k,L_k)=\biggl\lceil\frac{d_k b_s L_k}{f_k T_s}\biggr\rceil,
\end{equation}
where $  \lceil X\rceil $ is the minimum integer greater than or equal to $ X $. Moreover, the local computation power of $k$-th device is
\begin{equation}
p_{loc}(f_k)=\kappa f_k^3,
\end{equation}
where $\kappa$ is the effective switched capacitance related to the CPU architecture \cite{Burd1996Processor}.

\section{Problem Formulation}
In this section, we formulate the optimization of task offloading decision, uplink device selection and power allocation as an infinite-horizon MDP with discounted cost.

\subsection{System State and Scheduling Policy}

The system state and scheduling policy are defined as follows.

\begin{Definition}[System State]
	At the beginning of $ t $-th frame, the state of the MEC system is uniquely specified by $ \mathbf{S}_t =(\mathbf{S}^E_t,\mathbf{S}^L_t,\mathbf{S}^N_t)$, where
	\begin{itemize}
		\item $\mathbf S^E_t$ specifies the system status regarding the task offloading, including the set of edge computing devices $ \mathcal{U}_E(t) $, their uplink small-scale fading coefficients $ \mathcal{H}_E(t)\triangleq \{ h_k(t) | k \in \mathcal{U}_E(t) \} $ and pathloss coefficients $ \mathcal{G}_E(t)\triangleq \{  \rho_k(t) | k \in \mathcal{U}_E(t) \} $, and their uplink queue state information (QSI) $ \mathcal{Q}_E(t) \triangleq \{Q^E_k(t)| k\in \mathcal{U}_E(t)\}$.

		\item $\mathbf S^L_t$ specifies the system status regarding the local computing, including the set of local computing devices $\mathcal U_L(t)$, the application-dependent parameters $\mathcal {L}(t)\triangleq \{L_k(t) | k\in \mathcal U_L(t)\}$, their CPU frequencies $\mathcal F(t)\triangleq \{f_k(t)|k\in \mathcal U_L(t)\}$, and their QSI $\mathcal{Q}_L(t)\triangleq \{Q^L_k(t)|k\in \mathcal U_L(t)\}$.

		\item $\mathbf S^N_t$ specifies the system status regarding the new active device, including the indicator of new arrival $I_N(t)\triangleq I(\{n_t\} \neq \emptyset)$ where $I(\cdot)$ is the indicator function, its index $ n_t $, pathloss coefficient $ \rho_{n_t}(t) $, size of input data $ d_{n_t} ,$ CPU frequency $f_{n_t}$ and $L_{n_t}$.
	\end{itemize}

\end{Definition}

\begin{Definition}[Scheduling Policy]
The scheduling policy $ \Omega(\mathbf{S}_t)  \triangleq \left(a_t, p(t), e(t)\right)$ is a mapping from the system state $ \mathbf{S}_t $ to the scheduling actions, i.e, the index $a_t$ of the selected uplink transmission device in the $t$-th frame, the transmission power $p(t)$ and the offloading decision $e(t)$ for the new arriving active device (if any).
\end{Definition}

\subsection{Problem Formulation of MEC Scheduling}

According to Little's law, the average latency of one task is proportional to the average number of active devices in the system. Hence, we define the following weighted sum of the number of active devices and their power consumption as the system cost in the $ t $-th frame.
\begin{equation*}
g(\mathbf{S}_t, \Omega(\mathbf{S}_t))\! \triangleq\! |\mathcal{U}_E(t)| + |\mathcal{U}_L(t)|+ w [p(t) +\!\!\!\sum_{k\in\mathcal U_L (t)} p_{loc}(f_k)],
\end{equation*}
where $w$ is the weight on the power consumption of mobile devices. The overall minimization objective with the initial system state $ \mathbf{S}_1 $ is then given by
\begin{eqnarray}
\overline{G} (\Omega, \mathbf{S}_1)\!\!\!\!\!\! &\triangleq& \!\!\!\!\!\! \lim\limits_{T \rightarrow + \infty }\!\!\mathbb{E}_{\{\mathbf{S}^N_t,\mathcal{H}_E(t)|\forall t\}}\bigg[\sum_{t=1}^{T} \gamma^{t-1} g(\mathbf{S}_t, \Omega(\mathbf{S}_t)) \bigg| \mathbf{S}_1 \bigg], \nonumber
\end{eqnarray}
where $ \gamma $ is the discount factor. As a result, the MEC scheduling is formulated as the following infinite-horizon MDP.
\begin{Problem}[MEC Scheduling Problem]\label{prob:main}
\begin{eqnarray}
\Omega^{*}=\arg\min_{\Omega} \overline{G} (\Omega, \mathbf{S}_1).
\end{eqnarray}
\end{Problem}

According to \cite{Bertsekas2012Dynamic}, the optimal policy of Problem \ref{prob:main} can be obtained by solving the following Bellman's equations.
\begin{equation}
V\!(\mathbf{S}_t) \!\! =\!\! \min\limits_{\Omega(\mathbf{S}_t)}\!\! \bigg[\!g\!\left(\mathbf{S}_t,\Omega(\!\mathbf{S}_t\!)\!\right) + \!\!\sum_{\mathbf{S}_{t+1}}\!\! \gamma\! \Pr\!\left(\mathbf{S}_{t+1} | \mathbf{S}_{t}, \Omega(\mathbf{S}_t)\right) \!\!V\!(\mathbf{S}_{t+1})\! \bigg], \nonumber
\end{equation}
where $ V(\mathbf{S}) $ is the value function for system state $ \mathbf{S} $. Particularly, standard value iteration can be used to solve the value function, and the optimal policy can be derived by solving the minimization problem of the right-hand-side of the above Bellman's equations. In our problem, however, the traditional value iteration is intractable due to the following reasons: (1) the number of active devices is not fixed and the state space grows exponentially with the increasing number of active devices; (2) the spaces of small-scale fading and pathloss are continuous.

In order to address the above issues, we first reduce the system state space by exploiting (1) the independent distributions of small-scale fading and new active devices, and (2) the deterministic cost of local computing devices. The conclusion is summarized below.

\begin{Lemma}[Bellman's Equations with Reduced State Space] \label{lem: BE_reduced}
Define $C(n_t) \triangleq \sum_{\tau=1}^{T_{loc}(d_{n_t},f_{n_t},L_{n_t})}\!\gamma^{\tau} \left[ 1 + w p_{loc}(f_{n_t})\right]$, and $ {\widetilde{\mathbf{S}}_t} \triangleq \mathbf S^E_t/ \mathcal H_E(t)=(\mathcal U_E(t), \mathcal G_E(t), \mathcal Q_E(t)).$ Let
$$g'(\mathbf{S}_t,\Omega(\mathbf S_t))\triangleq|\mathcal{U}_E(t)| + w p(t) + I_N(t) (1-e(t)) C(n_t),$$
\begin{eqnarray}
W(\widetilde {\mathbf S})\!\!\!\!\!\! &\triangleq& \!\!\!\!\!\! \min_{\Omega} \lim\limits_{T \rightarrow + \infty }\mathbb{E} \bigg[\sum_{t=1}^{T} \gamma^{t-1} g'(\mathbf{S}_t, \Omega(\mathbf{S}_t)) \bigg| \mathbf{\widetilde S}_1=\mathbf{\widetilde S} \bigg]. \nonumber
\end{eqnarray}
The optimal scheduling action for the system state $ \mathbf{S}_t $, denoted as $ \Omega^{*} (\mathbf{S}_t) $, can be obtained as follows.
\begin{eqnarray}
\Omega^{*} (\mathbf{S}_t) &=&\arg\min\limits_{\Omega(\mathbf{S}_t)}\!\! \bigg\{g'\!\left( \mathbf{S}_t,\Omega(\mathbf{S}_t)\right)\!\!\nonumber\\
&+&\!\!\!\!\! \sum_{\widetilde{\mathbf{S}}_{t+1}}\!\gamma \!\Pr\!\left(\widetilde{\mathbf{S}}_{t+1} | \mathbf{S}_{t}, \Omega(\mathbf{S}_t)\right)\!\!W \!(\widetilde{\mathbf{S}}_{t+1}) \!\bigg\}.
\label{eqn:reduced-problem}
\end{eqnarray}
\end{Lemma}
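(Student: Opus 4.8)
The plan is to prove the lemma in two stages: first an exact \emph{cost-reaccounting} identity that replaces the frame-by-frame local-computing cost with a single lump-sum charge $C(n_t)$ paid at the arrival frame, and then a \emph{state-decoupling} argument that removes both the local-computing state $\mathbf{S}^L_t$ and the i.i.d.\ small-scale fading $\mathcal{H}_E(t)$ from the continuation value, leaving $W$ as a function of $\widetilde{\mathbf{S}}$ alone.

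First I would expand the objective along a sample path and split the stage cost as $g=g_E+g_L$, where $g_E(t)=|\mathcal{U}_E(t)|+w\,p(t)$ collects the edge/offloading terms and $g_L(t)=|\mathcal{U}_L(t)|+w\sum_{k\in\mathcal{U}_L(t)}p_{loc}(f_k)$ the local ones. The crux is the local part. Writing $|\mathcal{U}_L(t)|=\sum_k I(k\in\mathcal{U}_L(t))$ and interchanging the summations over $t$ and over devices $k$ (justified by nonnegativity of the stage costs and Tonelli, with $\gamma\in(0,1)$), I would use that a device admitted to local computing at its arrival frame $t_k$ occupies $\mathcal{U}_L$ over exactly the frames $t_k+1,\dots,t_k+T_{loc}(d_k,f_k,L_k)$ --- a \emph{deterministic} interval, since $T_{loc}$, $f_k$, $L_k$ are all fixed at admission and the recursion $Q^L_k(t+1)=[Q^L_k(t)-f_kT_s/(L_kb_s)]^+$ is deterministic. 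Substituting $t=t_k+\tau$ then gives $\sum_{t=t_k+1}^{t_k+T_{loc}}\gamma^{t-1}(1+w\,p_{loc}(f_k))=\gamma^{t_k-1}C(n_{t_k})$, so the entire discounted local cost collapses to $\sum_t\gamma^{t-1}I_N(t)(1-e(t))C(n_t)$. Combining with $g_E$ yields $\mathbb{E}[\sum_t\gamma^{t-1}g'(\mathbf{S}_t,\Omega(\mathbf{S}_t))]$ up to an additive term generated by the devices already in $\mathcal{U}_L(1)$, which is \emph{independent of the policy} (no action can alter an already-committed local computation) and therefore irrelevant to the $\arg\min$.

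Next I would carry out the state-reduction step. With stage cost $g'$, the objective no longer references $\mathbf{S}^L_t$, and the local-computing devices enter neither the uplink queue dynamics, nor the feasible action set, nor the transition of $(\mathcal{U}_E,\mathcal{G}_E,\mathcal{Q}_E)$; hence $\mathbf{S}^L_t$ is a spectator variable that may be dropped losslessly. For the fading I would note that $\mathcal{H}_E(t)$ and $\mathbf{S}^N_t$ are i.i.d.\ across frames and independent of the persistent quantities, so that given the \emph{full} state $\mathbf{S}_t$ and an action the next reduced state $\widetilde{\mathbf{S}}_{t+1}=(\mathcal{U}_E(t+1),\mathcal{G}_E(t+1),\mathcal{Q}_E(t+1))$ is determined, while the fresh fading and arrival of frame $t+1$ are averaged out. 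Defining $W(\widetilde{\mathbf{S}})$ as the optimal $g'$-cost-to-go then makes it a legitimate function of $\widetilde{\mathbf{S}}$ only, and the standard dynamic-programming optimality condition for the $g'$-MDP gives the stated recursion, with the inner minimizer being $\Omega^{*}(\mathbf{S}_t)$ of \eqref{eqn:reduced-problem}. Crucially, the current action is still chosen as a function of the full $\mathbf{S}_t$, so the present fading is exploited for power allocation and device selection; only the continuation value is compressed.

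I expect the main obstacle to be this second stage rather than the bookkeeping identity. Specifically, one must argue cleanly that integrating out the i.i.d.\ fading leaves \emph{no} residual dependence of the optimal cost-to-go on $\mathcal{H}_E(t)$ --- i.e.\ that it genuinely factorizes through $\widetilde{\mathbf{S}}$ --- and that discarding $\mathbf{S}^L_t$ is lossless precisely because the local subsystem is fully decoupled from the edge subsystem after the lump-sum reaccounting. Some care is also needed to confirm that the discount indices group exactly as $\gamma^{t_k-1}\gamma^{\tau}$, so that the summation range in $C(n_t)$ is precisely $\tau=1,\dots,T_{loc}$, and to handle the policy-independent initial-condition term so that it does not obstruct the equivalence of the two $\arg\min$ problems.
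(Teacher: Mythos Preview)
Your proposal is correct and follows essentially the same approach as the paper: the paper's Appendix~A is only a two-line sketch asserting the equality of the $g$- and $g'$-objectives and then ``taking expectation w.r.t.\ small-scale fading and $\mathbf{S}^N_t$'' in the Bellman equation, and your two stages (cost reaccounting via the deterministic local-computing interval, then state decoupling by integrating out the i.i.d.\ components and dropping the spectator $\mathbf{S}^L_t$) are exactly a fleshed-out version of those two moves. If anything, you are more careful than the paper in isolating the policy-independent contribution from devices already in $\mathcal{U}_L(1)$ and in noting that the current action still depends on the full $\mathbf{S}_t$ while only the continuation value is compressed to $\widetilde{\mathbf{S}}_{t+1}$.
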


\begin{proof}
	Please refer to appendix A.
\end{proof}

\section{Low-Complexity Solution}

In this section, we first introduce a heuristic scheduling policy as the baseline policy, whose value function can be derived analytically. Then, the proposed low-complexity sub-optimal policy can be obtained via the above value function and one-step policy iteration. The derived value function becomes the cost upper bound of the proposed policy.

\subsection{Baseline Scheduling Policy}
The baseline scheduling policy is elaborated below.
\begin{Policy}[Baseline Scheduling Policy $\Pi$]\label{pol:baseline}
	Given the system state $\mathbf{S}_t$, the baseline scheduling policy $\Pi(\mathbf{S}_t)=\left(a_t, p(t), e(t)\right)$ is provided below
	\begin{itemize}
		\item Uplink transmission device selection $a_t = \min \mathcal{U}_E(t)$, $\forall t$. Thus, the BS schedules the uplink device in a first-come-first-serve manner.
		\item The
		transmission power $p(t)$ compensates the large-scale fading (link compensation). Thus,
		\begin{align}
		p(t)=\frac{p_r}{\rho_{a_t}}, \forall t,
		\end{align}
		where $p_r$ is the average receiving power at the BS.
		\item The task of the new active device is offloaded to MEC server only when there is currently no active edge computing device, i.e.,
		\begin{align}
		e(t)=I\bigg(\mathcal{U}_E(t)=\emptyset\bigg), \forall t.
		\end{align}
	\end{itemize}
\end{Policy}

 Given system state $ \widetilde{\mathbf{S}}$, the value function  of policy $\Pi$ is defined as
\begin{align}
W_{\Pi}(\widetilde{\mathbf{S}})\!\triangleq\!\!\lim\limits_{T \rightarrow + \infty }\!\mathbb{E}\bigg[\! \sum_{t=1}^{T} \!\gamma^{t-1} g'\left(\mathbf{S}_t,\Pi(\mathbf{S}_t)\right)\bigg|\widetilde{\mathbf{S}}_1=\widetilde{\mathbf{S}}\bigg].
\end{align}
Denote the index of the $ k $-th arrived active edge computing device in $ \mathcal{U}_E \in {\widetilde{\mathbf S }}$  as $ m_{k} $. Let $ T_{k} $ be the number of frames for completing the uplink transmission of the $ m_{k} $-th device. $W_{\Pi}(\widetilde{\mathbf{S}})$ can be written as
\begin{align}
W_{\Pi}(\widetilde{\mathbf{S}})\!\! =  \!\mathbb{E}_{\{T_{k}|\forall k\}}\!\left [ \sum_{ k =1}^{|\mathcal{U}_E|}\!\! \left(\! w \gamma^{\sum\limits_{\!i\!=\!1\!}^{\!k\!-\!1\!}T_{i}} \frac{1\!-\!\gamma^{T_{\!k\!}}}{1\!-\!\gamma} \!\frac{p_r}{\rho_{\!m_{k}}}\!\! +\! \frac{1\!\!-\!\gamma^{\sum\limits_{i\!=\!1\!}^{k}T_{\!i}}}{1-\gamma}\!\right)\!\right ]&\nonumber\\
+P_N \mathbb{E}_{\{T_{k}|\forall k\},\{C(n_t)|\forall t\leq \sum_{k=1}^{|\mathcal{U}_E|} T_k \}}\bigg[\sum_{t=1}^{\sum_{k=1}^{|\mathcal{U}_E|}T_{k}} \gamma^{t-1}C(n_t) \bigg]&
\nonumber\\
\!+ \!\!\lim\limits_{\!T\!\rightarrow+\infty}\!\!\!\mathbb{E}_{\{T_{k}|\forall k\}, \{\mathbf{S}^N_t|\forall t> \sum_{k=1}^{|\mathcal{U}_E|} T_k  \}}\!\!\bigg[ \!\!\sum_{t\!=\!1\!+\!\sum_{k=\!1}^{|\mathcal{U}_E|}\!T_{k}}^{T}\!\!\!\!\! \gamma^{t-1} \!g'(\mathbf{S}_t,\Pi(\mathbf{S}_t)\!)\!\bigg],&\label{eqn:accurate_W_pi}
\end{align}
where the first term is the average offloading cost of the existing active edge computing devices, the second term is the average local computing cost from the first frame to the $ (\sum_{k=1}^{|\mathcal{U}_E|}T_{k}) $-th frame, and the last term is the average cost after the $ (\sum_{k=1}^{|\mathcal{U}_E|}T_{k}) $-th frame. The first two terms can be calculated by noticing the following factors.
\begin{itemize}
	\item Since the amount of input data of one task is usually much larger than the throughput of one frame, we have the following approximation
	\begin{align}
	T_{k} \approx \left\lceil\frac{ Q_{m_{k}} b_s }{\mathbb{E}_{
			h} W\log_2 \left( 1 + \frac{p_r |h|^2}{\sigma_z^2} \right)T_s} \right\rceil, \forall k,
	\end{align}
	where $ \mathbb{E}_{h}  $ is the expectation w.r.t. small-scale fading.

	\item
	$\mathbb{E} [C(n_t)]=\frac{\sum_{d_{min}}^{d_{max}}\int_{f_{min}}^{f_{max}}\int_{L_{min}}^{L_{max}}C(n_t)dfdL}{(d_{max}-d_{min}+1)(f_{max}-f_{min})(L_{max}-L_{min})}.$
\end{itemize}

Define the state without any edge computing device as $ \widetilde{\mathbf{S}}^{*} \triangleq [\mathcal{U}_E=\emptyset, \mathcal{G}_E=\emptyset, \mathcal{Q}_E=\emptyset]$. The third term of (\ref{eqn:accurate_W_pi}) can be written as $ \gamma^{\sum_{k=\!1}^{|\mathcal{U}_E|}\!T_{k}} W_{\Pi} (\widetilde{\mathbf{S}}^{*}) $, whose expression is derived in the following lemma.

\begin{Lemma}[Analytical Expression of $W_{\Pi}(\widetilde{\mathbf{S}}^*)$]\label{lem: ExpW_Pi_s1}
	Let $ \mathbf{u} = [1 \ 0 \ 0 \ 0 \ ...]^{\mathbf{T}} \in \mathbb R^{(d_{max}+1)\times 1}$ be the vector whose elements are all $ 0 $ except the first one, and $ \mathbf{a}^{\mathbf{T}}$ be the transpose of vector $ \mathbf{a} $. Let $ \mathbf{g} = [g_1 \ g_2 \ ... \ g_{d_{max}+1}]^{\mathbf{T}} \in \mathbb R^{(d_{max}+1)\times 1}$, where $g_1=0$, $g_i=1+w\mathbb{E}_{\rho_{n_t}}[\frac{p_r}{\rho_{n_t}}]+P_N\mathbb{E}[C(n_t)]$,  $\forall i=2,3,...,d_{max}+1$. The analytical expression of $W_{\Pi}(\widetilde{\mathbf{S}}^{*} )$ is given by
	\begin{align} \label{eq:W_Pi_S1}
		W_{\Pi}(\widetilde{\mathbf{S}}^{*} )=\sum_{t=1}^{+\infty}\mathbf{u}^{\mathbf T} (\gamma \mathbf{M})^{t-1}\mathbf{g}=\mathbf{u}^{\mathbf T} (\mathbf{I}-\gamma \mathbf{M})^{-1}\mathbf{g},
	\end{align}
where $\mathbf{I}\in \mathbb{R}^{(d_{max}+1)\times (d_{max}+1)}$ is the identity matrix. Moreover, the elements of the transition probability matrix $\mathbf{M}\in \mathbb{R}^{(d_{max}+1)\times (d_{max}+1)}$ are given by
\begin{itemize}
	\item  $\mathbf{M}_{1,1}=1-P_N$,
	\item $\mathbf{M}_{1,j}=0$, for $j=2,3,...,d_{min}$,
	\item $\mathbf{M}_{1,j}=\frac{P_N}{d_{max}-d_{min}+1}$, for $j=d_{min}+1,d_{min}+2,...,d_{max}+1$,
	\item $\mathbf{M}_{i,j}=0$, for $1<i< j$,
	\item  $\mathbf{M}_{i,1}=  \exp\{-\frac{[2^{\frac{(i-1) b_s}{WT_s}}-1]\sigma^{2}_{z}}{p_r}\}$, for $i=2,3,...,d_{max}+1$,
	\item $\mathbf{M}_{i,j}\!\!=\!\!\exp\{\!-\frac{[2^{\frac{(i-j)b_s}{WT_s}}-1]\sigma^{2}_{z}}{p_r}\}\!-\!\exp\{-\frac{[2^{\frac{(i-j+1)b_s}{WT_s}}-1]\sigma^{2}_{z}}{p_r}\}$, otherwise.

\end{itemize}
\end{Lemma}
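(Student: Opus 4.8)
The plan is to recognize $W_{\Pi}(\widetilde{\mathbf{S}}^*)$ as the discounted value function of a finite-state Markov reward process and then evaluate it in closed form by summing a Neumann series. The crucial structural observation I would establish first is that, under the baseline policy $\Pi$ started from the empty state $\widetilde{\mathbf{S}}^*$, there is \emph{at most one} active edge computing device at any frame: a new task is offloaded only when $\mathcal{U}_E(t)=\emptyset$, and once a device enters $\mathcal{U}_E$ it is served in a FCFS manner until its uplink queue empties, any arrival in the meantime being sent to local computing. Hence the entire edge-side evolution relevant to $g'$ is captured by one scalar, the uplink queue length of the unique edge device, which lives in $\{0,1,\dots,d_{max}\}$. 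Identifying matrix index $i$ with queue length $i-1$ makes state $1$ the empty state and gives a reduced chain with exactly $d_{max}+1$ states.

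Next I would derive the one-step transition law of this scalar chain and match it to $\mathbf{M}$. From the empty state (row $1$): with probability $1-P_N$ no device arrives, giving $\mathbf{M}_{1,1}=1-P_N$; with probability $P_N$ a device arrives, is offloaded, and sets its queue to $d_{n_t}\sim\mathbb{U}(d_{min},d_{max})$, spreading mass $P_N/(d_{max}-d_{min}+1)$ over $j=d_{min}+1,\dots,d_{max}+1$ and $0$ on the unreachable $j=2,\dots,d_{min}$. From a nonempty state $i$ the queue can only shrink, so $\mathbf{M}_{i,j}=0$ for $j>i$; the number of transmitted segments is $\phi=\lfloor r T_s/b_s\rfloor$ with $r=W\log_2(1+p_r|h|^2/\sigma_z^2)$ under link compensation. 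Using $|h|^2\sim\mathrm{Exp}(1)$, I would show $\Pr[\phi\ge m]=\exp\{-(2^{m b_s/(WT_s)}-1)\sigma_z^2/p_r\}$, from which $\mathbf{M}_{i,1}=\Pr[\phi\ge i-1]$ and $\mathbf{M}_{i,j}=\Pr[\phi=i-j]=\Pr[\phi\ge i-j]-\Pr[\phi\ge i-j+1]$ follow exactly. A check that each row sums to one, via $\Pr[\phi\ge i-1]+\Pr[\phi\le i-2]=1$, confirms $\mathbf{M}$ is row-stochastic.

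I would then compute the per-state expected one-stage cost and match it to $\mathbf{g}$. Substituting $\Pi$ into $g'(\mathbf{S}_t,\Omega(\mathbf{S}_t))=|\mathcal{U}_E(t)|+wp(t)+I_N(t)(1-e(t))C(n_t)$: in the empty state $|\mathcal{U}_E|=0$, $p=0$, and any arrival is offloaded so $(1-e(t))=0$, giving $g_1=0$; in a nonempty state $|\mathcal{U}_E|=1$, the link-compensated power contributes $w p_r/\rho$ (averaged to $w\mathbb{E}_{\rho_{n_t}}[p_r/\rho_{n_t}]$) while an arrival, now computed locally, contributes $C(n_t)$ with probability $P_N$ (averaged to $P_N\mathbb{E}[C(n_t)]$), giving $g_i=1+w\mathbb{E}_{\rho_{n_t}}[p_r/\rho_{n_t}]+P_N\mathbb{E}[C(n_t)]$. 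The decoupling into a cost depending only on the queue-length state hinges on link compensation making the received power, and hence the transition law, independent of the pathloss $\rho$ and of the local parameters, so these quantities enter only through their expectations; independence then lets me attach the averaged per-frame cost as the reward of each state visit.

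Finally, since the state distribution at frame $t$ is $\mathbf{u}^{\mathbf T}\mathbf{M}^{t-1}$, the definition of the value function gives $W_{\Pi}(\widetilde{\mathbf{S}}^*)=\sum_{t=1}^{\infty}\gamma^{t-1}\mathbf{u}^{\mathbf T}\mathbf{M}^{t-1}\mathbf{g}=\mathbf{u}^{\mathbf T}\big(\sum_{t=0}^{\infty}(\gamma\mathbf{M})^{t}\big)\mathbf{g}$, and summing the operator geometric series yields $\mathbf{u}^{\mathbf T}(\mathbf{I}-\gamma\mathbf{M})^{-1}\mathbf{g}$. Convergence and invertibility hold because $\mathbf{M}$ is row-stochastic, so its spectral radius is $1$ and that of $\gamma\mathbf{M}$ is $\gamma<1$. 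I expect the main obstacle to be the first step: rigorously arguing the at-most-one-edge-device reduction and justifying that the value function over the full state (CSI, future pathlosses, future local parameters) collapses to a function of the scalar queue length alone, i.e., that the randomness of $h$ feeds only the $\rho$-independent matrix $\mathbf{M}$ while $\rho$ and $C(n_t)$ feed only the expected reward $\mathbf{g}$. The transition-probability and geometric-series computations are then routine.
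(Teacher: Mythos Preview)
Your proposal is correct and follows essentially the same route as the paper: reduce to the at-most-one-edge-device observation, encode the system as a $(d_{max}+1)$-state Markov chain on the uplink queue length, derive the transition entries of $\mathbf{M}$ case by case, and sum the discounted geometric series. In fact your outline is more complete than the paper's sketch, which only verifies the entries of $\mathbf{M}$ and leaves the derivation of $\mathbf{g}$, the row-stochasticity check, and the spectral-radius convergence argument implicit; your explicit use of link compensation to decouple the transition law from $\rho$ (so that pathloss enters only through $\mathbb{E}_{\rho_{n_t}}[p_r/\rho_{n_t}]$ in $\mathbf{g}$) is exactly the point that makes the reduction work.
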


\begin{proof}
Please refer to Appendix B.
\end{proof}

\subsection{Scheduling with Approximate Value Function}

In this part, we use the value function of the baseline policy $ W_{\Pi}(\widetilde {\mathbf S}) $ to approximate the value function of the optimal policy $ W(\widetilde{\mathbf S}) $ in optimization problem (\ref{eqn:reduced-problem}). Hence, in the $ t $-th frame, the scheduling actions can be derived by the following problem.

\begin{Problem}[Sub-optimal Scheduling Problem]\label{prob: sub-optimal}
	\begin{align}
	\Pi'&=\arg\min\limits_{\Omega({\mathbf{S}}_t)}  \bigg\{g'\left(\mathbf{S}_t,\Omega(\mathbf{S}_t)\right) \nonumber \\
	 &+\sum_{\widetilde{\mathbf{s}}_{t+1}} \gamma \Pr\left(\widetilde{\mathbf{S}}_{t+1} | \mathbf{S}_{t}, \Omega(\mathbf{S}_t)\right) W_{\Pi}(\widetilde{\mathbf{S}}_{t+1}) \bigg\}.
	\end{align}
\end{Problem}

Problem \ref{prob: sub-optimal} can be solved by the following steps.
\begin{itemize}
\item {\bf Step 1:} For each $k\in \mathcal{U}_E(t)$, calculate
\begin{align}
G_E^{k}&=	\min\limits_{p_k(t)}  \bigg\{wp_k(t)\nonumber \\
		&+ \!\!\sum_{\widetilde{\mathbf{S}}_{t+\!1}} \!\!\gamma \!\Pr\!\left(\widetilde{\mathbf{S}}_{t+\!1} \!| \mathbf{S}_{t}, \!e(t)\!\!=\!\!1, \!a_t\!\!=\!\!k, \!p_k\!(t)\!\right)\!\! W_{\Pi}(\widetilde{\mathbf{S}}_{t+\!1\!}) \!\bigg\},\nonumber
\end{align}
and
\begin{align}
			G_L^{k}&=C(n_t)+\min\limits_{p_k(t)}  \bigg\{wp_k(t)\nonumber \\
			&+ \!\!\sum_{\widetilde{\mathbf{S}}_{t+\!1}} \!\!\gamma \!\Pr\!\left(\widetilde{\mathbf{S}}_{t+\!1} \!| \mathbf{S}_{t}, \!e(t)\!\!=\!\!0, \!a_t\!\!=\!\!k, \!p_k\!(t)\!\right)\!\! W_{\Pi}(\widetilde{\mathbf{S}}_{t+\!1\!}) \!\bigg\}.\nonumber
\end{align}
Let $ p_{k,E}^{*} (t)$ and $ p_{k,L}^{*} (t)$ be the optimal power allocation of the above two problems respectively. Note that if there is no arrival of new active device, $ C(n_t) = 0 $, and the above two problems are the same.

\item {\bf Step 2:} If $\min_k G_{E}^k<\min_k G_{L}^k$,  the solution of Problem \ref{prob: sub-optimal} is given by $(e(t)=1,a_t=k_E^*,p_k(t)=p_{k_E^*,E}^*(t))$, where $k_E^*=\arg\min_k G_{E}^k$. Otherwise, the solution of Problem \ref{prob: sub-optimal} is given by	$(e(t)=0,a_t=k_L^*,p_k(t)=p_{k_L^*,L}^*(t))$, where $k_L^*=\arg\min_k G_{L}^k$.
	\end{itemize}

Moreover, we have the following bounds on the proposed scheduling policy.

\begin{Lemma}[Performance Bounds]\label{lem: PerformanceBounds}
	Let $ W_{\Pi'} (\widetilde{\mathbf{S}})\!\triangleq\!\!\lim\limits_{T \rightarrow + \infty }\!\mathbb{E}\bigg[\! \sum_{t=1}^{T} \!\gamma^{t-1} g'\left(\mathbf{S}_t,\Pi'(\mathbf{S}_t)\right)\bigg|\widetilde{\mathbf{S}}_1=\widetilde{\mathbf{S}}\bigg]$ be the value function of the policy $\Pi'$, then
	\begin{align}
		W(\widetilde{\mathbf{S}}_t) \leq W_{\Pi'}(\widetilde{\mathbf{S}}_t) \leq	W_{\Pi}(\widetilde{\mathbf{S}}_t), \forall \widetilde{\mathbf{S}}_t.
	\end{align}
\end{Lemma}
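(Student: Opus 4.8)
The plan is to treat this as the classical policy-improvement and optimality-bounding argument for discounted MDPs, adapted to the reduced state space $\widetilde{\mathbf{S}}$ of Lemma \ref{lem: BE_reduced}. The left inequality $W(\widetilde{\mathbf{S}}_t) \le W_{\Pi'}(\widetilde{\mathbf{S}}_t)$ is immediate: in Lemma \ref{lem: BE_reduced}, $W$ is defined as the pointwise infimum of the discounted cost over all admissible policies, whereas $W_{\Pi'}$ is the cost incurred by the single policy $\Pi'$. Since $\Pi'$ is one feasible policy, its cost cannot fall below the infimum, which is exactly the claimed bound.

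For the right inequality $W_{\Pi'}(\widetilde{\mathbf{S}}_t) \le W_{\Pi}(\widetilde{\mathbf{S}}_t)$ I would first introduce, for an arbitrary stationary policy $\Omega$, the one-step Bellman operator acting on functions of the reduced state,
$$(T_\Omega W)(\widetilde{\mathbf{S}}) \triangleq \mathbb{E}_{\mathbf{S}^N, \mathcal{H}_E}\Big[ g'(\mathbf{S}, \Omega(\mathbf{S})) + \gamma \sum_{\widetilde{\mathbf{S}}'} \Pr(\widetilde{\mathbf{S}}' | \mathbf{S}, \Omega(\mathbf{S})) W(\widetilde{\mathbf{S}}') \,\Big|\, \widetilde{\mathbf{S}} \Big],$$
where the outer expectation averages over the marginalized components (the small-scale fading $\mathcal{H}_E$ and the new-arrival descriptor $\mathbf{S}^N$) exactly as in the derivation of Lemma \ref{lem: BE_reduced}. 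Two properties are needed. First, $W_\Pi$ is the fixed point of $T_\Pi$, i.e. $T_\Pi W_\Pi = W_\Pi$; this is just the self-consistency (Bellman) equation for the baseline policy, which holds because $W_\Pi$ is by definition the discounted cost of running $\Pi$ forever. Second, by the very construction of $\Pi'$ in Problem \ref{prob: sub-optimal}, $\Pi'$ selects at every state the action minimizing the one-step lookahead built from $W_\Pi$; hence $T_{\Pi'} W_\Pi = \min_\Omega T_\Omega W_\Pi \le T_\Pi W_\Pi = W_\Pi$.

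Next I would combine these with the monotonicity and contraction of $T_{\Pi'}$. Because the transition probabilities are nonnegative and $\gamma \in (0,1)$, $T_{\Pi'}$ is monotone (if $U \le U'$ pointwise then $T_{\Pi'} U \le T_{\Pi'} U'$) and is a $\gamma$-contraction in the sup-norm. Applying monotonicity to $T_{\Pi'} W_\Pi \le W_\Pi$ and inducting gives $T_{\Pi'}^{n+1} W_\Pi \le T_{\Pi'}^n W_\Pi \le W_\Pi$ for every $n$. The contraction property guarantees that $T_{\Pi'}^n W_\Pi$ converges to the unique fixed point of $T_{\Pi'}$, which is precisely $W_{\Pi'}$. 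Letting $n \to \infty$ in $T_{\Pi'}^n W_\Pi \le W_\Pi$ then yields $W_{\Pi'} \le W_\Pi$, completing the chain.

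The step I expect to be the main obstacle is the bookkeeping forced by the state reduction: the cost $g'$ and the transition kernel are defined on the full state $\mathbf{S}$, while $W_\Pi$, $W_{\Pi'}$ and $W$ live on the reduced state $\widetilde{\mathbf{S}}$, so I must verify that the operator $T_\Omega$ above is well defined on the reduced space and that the fixed-point identity $T_\Pi W_\Pi = W_\Pi$ and the greedy inequality $T_{\Pi'} W_\Pi \le T_\Pi W_\Pi$ genuinely survive the averaging over $\mathcal{H}_E$ and $\mathbf{S}^N$. Once this reduced-space Bellman structure is established consistently with Lemma \ref{lem: BE_reduced}, the remaining monotone-contraction argument is routine.
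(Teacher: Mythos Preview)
Your proposal is correct and follows essentially the same approach as the paper: the paper dismisses the left inequality as immediate and refers the right inequality to the Policy Improvement Property in Bertsekas' textbook, which is precisely the monotone-contraction argument you have spelled out explicitly. Your additional care about whether the reduced-state Bellman operator is well defined after averaging over $\mathcal{H}_E$ and $\mathbf{S}^N$ goes beyond what the paper verifies, but is a legitimate point to check and does not change the route.
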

\begin{proof}
Since policy $\Pi'$ is not the optimal scheduling policy, $W(\widetilde{\mathbf{S}}_t) \leq	W_{\Pi'}(\widetilde{\mathbf{S}}_t)$ is straightforward. The proof of $W_{\Pi'}(\widetilde{\mathbf{S}}_t) \leq W_{\Pi}(\widetilde{\mathbf{S}}_t)$ is similar to the proof of \emph{ Policy Improvement Property} in chapter II of \cite{Bertsekas2012Dynamic}.
\end{proof}

\begin{Remark}
In most of the existing literature on wireless resource allocation with approximate MDP \cite{RWang2013RelayApproxMDP,YSun2019PushingCaching,BLv2019Cache}, the performance can hardly be bounded analytically. The novel solution framework proposed in this paper provides a low-complexity policy whose performance can be bounded analytically. Particularly, since the approximate value function is derived analytically, the conventional value iteration can be avoided, which reduces the computation complexity. Moreover, since the proposed policy is obtained from the baseline policy with analytical value function, its average cost is naturally upper-bounded by the value function of the baseline policy.
\end{Remark}

\section{Simulation Results}
In this section, we evaluate the performance of the proposed low-complexity sub-optimal scheduling policy by numerical simulations. In the simulation, the frame duration is $T_s=10$ ms. The input data size of each task is uniformly distributed between $ 200 $ and $ 300 $ segments, each of a size of $ 10 $ Kb. Local CPU frequency is $ 1 $GHz and $ 500 $ CPU cycles are needed to compute 1-bit input data. The effective switched capacitance is $\kappa=10^{-28}$. Moreover, the uplink bandwidth is $W=10 \ \mbox{MHz} $, noise power is $\sigma_z^2=-104 \ \mbox{dBm} $. We compare our proposed scheduling policy with three benchmark policies including (1) the baseline policy (BSL) as elaborated in section IV-A; (2) \emph{all local computing policy} (ALC), where all the active devices execute their tasks locally; and (3) \emph{all edge computing policy} (AEC), where all the active devices offload their tasks to the MEC server.

\begin{figure}[tb]
	\centering
	\includegraphics[height=3.815cm,width=5.6cm]{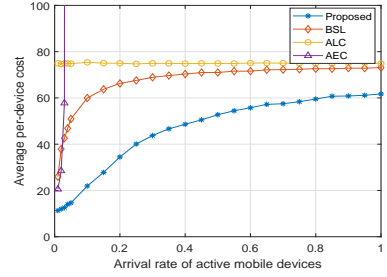}
	\caption{Average per-device cost versus arrival rate}
	\label{fig:cost}
\end{figure}

\begin{figure}[tb]
	\centering
	\includegraphics[height=3.815cm,width=5.6cm]{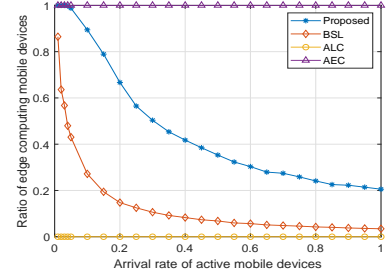}
	\caption{Ratio of edge computing versus arrival rate}
	\label{fig:ratio}
\end{figure}

Fig. \ref{fig:cost} shows the average per-device costs versus the arrival rates of active devices. It can be observed that the average per-device costs of all the policies grow with the increase of arrival rate except ALC policy. For ALC policy, since all the active devices computed their tasks locally, the arrival rate has no influence on the average per-device cost. For AEC policy, the average per-device cost grows quickly with the increase of arrival rate due to limited wireless transmission capability. It is also shown that our proposed policy always outperforms  BSL policy especially when the arrival rate falls in the region of $ (0,0.4) $. Besides, it can be seen that when the arrival rate is sufficiently large, the costs of both BSL policy and our proposed policy converge to the cost of ALC policy. This observation can be explained by Fig. \ref{fig:ratio} which shows that the ratio of edge computing devices tends to $ 0 $ for sufficiently large arrival rate. Moreover, as shown in Fig. \ref{fig:ratio}, the ratio of edge computing devices of our proposed policy is remarkably lager than that of BSL policy. Hence, our proposed policy can better exploit the MEC server to save the energy consumption of mobile devices and reduce latency.

\section{Summary}
In this paper, we formulate the scheduling design of a multi-user MEC system as an infinite-horizon MDP, and propose a novel low-complexity solution framework to obtain a sub-optimal policy via an analytical approximation of value function. The performance of the sub-optimal policy can be analytically bounded. Simulation results demonstrate the significant performance gain of the proposed scheduling policy over various benchmarks.

\section*{Appendix A: Proof of Lemma \ref{lem: BE_reduced}}

Due to limited space, we only provide the sketch of the proof. Since $\lim\limits_{T \rightarrow + \infty }\mathbb{E} \bigg[\sum_{t=1}^{T} \gamma^{t-1} g'(\mathbf{S}_t, \Omega(\mathbf{S}_t)) \bigg| \mathbf{\widetilde S}_1 \bigg]=\lim\limits_{T \rightarrow + \infty }\mathbb{E}\bigg[\sum_{t=1}^{T} \gamma^{t-1} g(\mathbf{S}_t, \Omega(\mathbf{S}_t)) \bigg| \mathbf{S}_1 \bigg],$ minimizing the right-hand-side is equivalent to minimizing the left-hand-side. The conclusion of Lemma \ref{lem: BE_reduced} can be obtained by writing down the Bellman's equations for  $\min_{\Omega}\lim\limits_{T \rightarrow + \infty }\mathbb{E} \bigg[\sum_{t=1}^{T} \gamma^{t-1} g'(\mathbf{S}_t, \Omega(\mathbf{S}_t)) \bigg| \mathbf{\widetilde S}_1 \bigg]$, and taking expectation w.r.t. small-scale fading and $ \mathbf{S}_t^N $.

\section*{Appendix B: Proof of Lemma \ref{lem: ExpW_Pi_s1}}
With baseline policy $\Pi$ and initial system state $\widetilde{\mathbf S}^*$, there is at most one edge computing device. In fact, the $ i $-th element of vector $ \mathbf{u} $ represents the probability that there is one edge computing device with $ (i-1) $ segments in the uplink transmission queue; the $ (i,j) $-th element of matrix $ \mathbf{M} $ represents the probability that there is one edge computing device with $ (j-1) $ segments in the uplink transmission queue in the next frame, given $ (i-1) $ segments in the current frame. Hence, we have the following discussion on $ \mathbf{M}_{i,j} $.
\begin{itemize}
	\item  $i=1, j=1$: Transition from 1st state (0 segment) to 1st state means that there is no new active device arrival. Hence  $M_{1,1}=1-P_N$.
	\item $i=1, j=2,3,...,d_{min}$: Since the minimum size of a new task is $d_{min}$ segments, it is impossible to transit from $ 0 $ segment to  $(j-1)$ segments. Hence, $M_{i,j}=0$.
	\item  $i=1, j=d_{min}+1,d_{min}+2,...,d_{max}+1$: This means there is a new active device arrival. The probability of a new active device arrival is $P_N$ and the task size of the new active device is uniformly distributed between $d_{min}$ to $d_{max}$. Thus, the probability of transiting from 1st state (0 segment) to $j$-th state ($j-1$ segments) for $ j=d_{min}+1,d_{min}+2,...,d_{max}+1$ is $M_{i,j}=\frac{P_N}{d_{max}-d_{min}+1}$.
	\item $1<i< j\leq d_{max}+1$: $i>1$ indicates that the current uplink transmission queue is not empty. Hence, the edge computing queue will not increase since the new arrival will be scheduled for local computing under policy $\Pi$. Therefore, $\mathbf{M}_{i,j}=0$, for $\forall 1<i< j$.
	\item  $i=2,3,...,d_{max}+1, j=1$: This means that the current edge computing device will finish transmitting the remaining ($i-1$) segments within current frame. Hence, $\mathbf{M}_{i,1}= \mbox{Pr}\left(W\log_2(1+\frac{p_r |h|^2}{\sigma_z^2}\geq (i-1)b_s)\right) =\exp\{-\frac{[2^{(i-1) b_s/(WT_s)}-1]\sigma^{2}_{z}}{p_r}\}$.
	\item $d_{max}+1\geq i \geq j>1$: This means that the edge computing device will transmit ($i-j$) segments within current frame. Hence, $\mathbf{M}_{i,j}=\mbox{Pr}\left((i-j)b_s\leq W\log_2(1+\frac{p_r |h|^2}{\sigma_z^2}\leq (i-j+1)b_s\right)=\exp\{\!-\!\frac{[2^{(i\!-\!j)b_s\!/(\!W\!T_s\!)\!}\!-\!1\!]\sigma^{2}_{z}}{p_r}\!\}-\exp\{-\frac{[2^{(i-j+1)b_s/(WT_s)}-1]\sigma^{2}_{z}}{p_r}\}$.

\end{itemize}

\bibliographystyle{IEEEtran}
\bibliography{reference_shortened}

\end{document}